\newtheorem{theorem}{Theorem}
\newtheorem{lemma}{Lemma}
\newtheorem{Property}{Property}
\begin{document}
\title{Universal representation by Boltzmann machines with Regularised Axons}

\author{Przemys{\l}aw R. Grzybowski}
\thanks{Correspondence should be addressed to \href{mailto:grzyb@amu.edu.pl}{grzyb@amu.edu.pl}}
\affiliation{ICFO - Institut de Ciències Fot\`oniques, The Barcelona Institute of Science and Technology, 08860 Castelldefels (Barcelona), Spain}
\affiliation{Institute of Spintronics and Quantum Information, Faculty of Physics, Adam Mickiewicz University, Umultowska 85, 61-614 Pozna{\'n}, Poland}

\author{Antoni Jankiewicz}
\affiliation{Institute of Spintronics and Quantum Information, Faculty of Physics, Adam Mickiewicz University, Umultowska 85, 61-614 Pozna{\'n}, Poland}

\author{Eloy Pi\~nol}
\affiliation{ICFO - Institut de Ciències Fot\`oniques, The Barcelona Institute of Science and Technology, 08860 Castelldefels (Barcelona), Spain}

\author{David Cirauqui}
\affiliation{ICFO - Institut de Ciències Fot\`oniques, The Barcelona Institute of Science and Technology, 08860 Castelldefels (Barcelona), Spain}
\affiliation{Quside Technologies SL, Mediterranean Technology Park, 08860 Castelldefels (Barcelona), Spain}

\author{Dorota H. Grzybowska}
\noaffiliation{}

\author{Pawe{\l} M. Petrykowski}
\noaffiliation{}

\author{Miguel \'Angel Garc\'ia-March}
\affiliation{Instituto Universitario de Matem\'atica Pura y Aplicada, Universitat Polit\`ecnica de
Val\`encia, 46022 Val\`encia, Spain}

\author{Maciej Lewenstein}
\affiliation{ICFO - Institut de Ciències Fot\`oniques, The Barcelona Institute of Science and Technology, 08860 Castelldefels (Barcelona), Spain}
\affiliation{ICREA - Instituci\'o Catalana de Recerca i Estudis Avan\c cats, E-08010 Barcelona, Spain}

\author{Gorka Mu\~noz-Gil}
\affiliation{Institute for Theoretical Physics, University of Innsbruck, Technikerstr. 21a, A-6020 Innsbruck, Austria}

\author{Alejandro Pozas-Kerstjens}
\affiliation{Group of Applied Physics, University of Geneva, 1211 Geneva 4, Switzerland}
\affiliation{Constructor University, Geneva, Switzerland}
\affiliation{Instituto de Ciencias Matem\'aticas (CSIC-UAM-UC3M-UCM), 28049 Madrid, Spain}

\begin{abstract}
It is widely known that Boltzmann machines are capable of representing arbitrary probability distributions over the values of their visible neurons, given enough hidden ones.
However, sampling --and thus training-- these models can be numerically hard.
Recently we proposed a regularisation of the connections of Boltzmann machines, in order to control the energy landscape of the model, paving a way for efficient sampling and training.
Here we formally prove that such regularised Boltzmann machines preserve the ability to represent arbitrary distributions.
This is in conjunction with controlling the number of energy local minima, thus enabling easy \emph{guided} sampling and training.
Furthermore, we explicitly show that regularised Boltzmann machines can store exponentially many arbitrarily correlated visible patterns with perfect retrieval, and we connect them to the Dense Associative Memory networks.
\end{abstract}

\maketitle

\section{Introduction}\label{sec:intro}
Energy-based models are an attractive option for generative unsupervised \cite{NEURIPS2019_378a063b,Ramsauer2020HopfieldNI} and supervised learning \cite{kehoe2021defence}.
Among them, two closely related models, classical Hopfield networks~\cite{hopfield} and Boltzmann machines (BMs)~\cite{ACKLEY1985147}, have played a paradigmatic role in the development of the field.
The first serve as associative memory models, allowing for the storage and dynamical retrieval of system configurations, also known as \textit{patterns}.
However their storage capacity is proportional to the number of neurons $N$~\cite{Gardner_1988}, and therefore cannot represent arbitrary probability distributions over all the $2^N$ configurations of neurons.
The addition of latent (or \textit{hidden}) neurons led to BMs, which can be universal approximators of any probability distribution for the configurations of visible binary neurons~\cite{YOUNES1996109}.
Furthermore the shallow version of these models, called restricted BMs (RBM)~\cite{smolensky1986information,freund1991unsupervised}, was proven to be easily trainable, and thus became standard elements of many larger networks such as deep belief networks~\cite{bengio2009learning}.
Unfortunately, training deep BMs \cite{pmlr-v5-salakhutdinov09a} requires substantial numerical effort because sampling these models is hard.
Although some advanced Markov Chain Monte Carlo (MCMC) methods have been adopted for that end, such as Coupled Adaptive Simulated Tempering~\cite{10.5555/3104322.3104442}, deep BMs are less widely used than RBMs despite their potential advantage in learning data features.

Recently, Ref.~\cite{rapid} proposed to regularise the weights (axons) of BMs in order to control the energy landscape of the model, and ultimately to ease its sampling and training.
This is achieved by building the model's weights from combinations of a set of patterns, similar to the Hebbian weights in Hopfield networks.
Then, it is these patterns, instead of the weights, who constitute the trainable parameters of the model.
The inspiration for this regularisation stems from controlling the ``spin frustration'' of the model, as well as making the weight matrix similar to that of successfully trained BMs~\cite{Decelle_2017,doi:10.7566/JPSJ.91.114001}.
The successful regularisation of weights should result in a controlled number of energy local minima, such that the patterns are (or are close to) low-energy states of the model.
Therefore, the information about patterns should allow for an effective \emph{guided} sampling of the model at low temperatures.
This hypothesis was empirically verified in Ref.~\cite{rapid} for several benchmark data sets.
Yet, the values that weights can take in a regularised BM (RA-BM, using the terminology of \cite{rapid}) is a strict subset of the values that weights of unregularized BMs can take.
This motivates the question of the real representation power of regularised BMs.

In this work, we present a formal proof that regularised BMs can represent arbitrary probability distributions over the values of their visible binary neurons.
 In addition to this, and importantly, we formally prove the intuitions verified in \cite{rapid} regarding sampling: RA-BMs can always have easy pattern-guided (or PID, in the terminology of \cite{rapid}) sampling.
 These results allow us to prove that the construction of weights from patterns in BMs can work very differently from the Hebbian weights of the Hopfield networks.
In the standard Hopfield model (cf. \cite{Amit-book}), patterns are uncorrelated and are recovered faithfully only if their number is smaller than $0.14 N$~\cite{Amit1,Amit2}. Furthermore, in Hopfield networks, additional minima corresponding to ``spurious states'' arise from mixed patterns.

A simple ``signal-to-noise'' estimation \cite{Amit-book}, or an estimate based on Random Matrix Theory \cite{Tarkowski2} shows that storage capacity in the Hopfield model with Hebbian rule is also non-zero and proportional to $N$. In fact, various kinds of correlated patterns have been considered in the literature: ``spatially'' \cite{Tarkowski1,Tarkowski2,Tarkowski4,Monasson1,Monasson2}, ``semantically'' \cite{Tarkowski2,Tarkowski4}, and ``symmetry invariant'' \cite{Tarkowski3} correlated patterns. 
These papers, with the exception of \cite{Tarkowski2}, mostly focused on optimal storage capacity using Gardner's approach \cite{Gardner_1988}, but the techniques used clearly indicate that similar calculation is feasible for the problem of storage capacity of the Hopfield model with the Hebbian rule with correlated patterns, at least for weak correlations \cite{Grzybowski-future}. We stress, however, that here, we show that for BMs at low temperatures, patterns can be perfectly stored even if they are arbitrarily strongly correlated in their visible part. For structured data, one can consider two-layered network as in the Bidirectional Associative Memories (BAM), introduced by Kurchan {\it et al.} in the 1990s \cite{Kurchan1994} which exhibit increased storage capacity than analogous Hopfield network with Hebbian rule, in case of data asymmetry. These results has been generalized recently to finite temperature \cite{Barra2023}. Seemingly the mathematical form of BAM networks is similar to RA-RBMs. Yet storage capacity of uncorrelated patterns for BAM and Hopfield models alike is bound by $2N$ \cite{Gardner_1988} ({\it i.e.} is linear). The use of latent (hidden) units in BMs allows to overcome this limitation. Here we prove that RA-RBMs can store the number of patterns as large as $2^V$, where $V$ is the number of visible neurons. We also prove the absence of spurious states at low temperatures irrespective of patterns correlations.
 Interestingly, these properties resemble Dense Associative Memory networks or modern Hopfield networks~\cite{Ramsauer2020HopfieldNI,krotov,krotov2021large,mezard}.
 Indeed, should one marginalize a BM over its hidden neurons, the result would be a network consisting only of $V$ visible neurons with multi-spin interactions, in the spirit of Refs.~\cite{krotov, Demircigil}.
Should one marginalize an RA-BM over hidden neurons, the resulting network, at low temperatures, can have the exponential capacity for storage and perfect retrieval of arbitrarily correlated patterns. In this sense, BMs with regularized weights are related to the embeddings of Dense Associative Memory networks in a set of two-body interaction models~\cite{krotov2021large}.  

The work is organized as follows: in Section \ref{sec:prevwork} we discuss related literature; in Section \ref{sec:prelim} we set the notation and definitions for the rest of the manuscript; Section \ref{sec:proofs} contains the main body of the work, presenting the model construction and proving the relevant theorems, which are discussed in detail in Section \ref{sec:discussion}; finally, Section \ref{sec:summary} contains a summary of the results and an outlook of their impact.

\section{Related work}\label{sec:prevwork}
In general, the representational power of BMs depends on its construction, i.e. the number of hidden neurons and the topology of their connections, as well as on the correlations in the data set that should be learnt.
For restricted BMs it is known that: (i) each hidden unit can model the probability of one visible configuration (called also ``elementary event'')~\cite{le2008representational}; (ii) each hidden unit can model the probability of two visible configurations of Hamming distance one~\cite{montufar2011refinements}; (iii) each hidden unit can model a block of correlated visible configurations with weighted probability distribution~\cite{montufar2011expressive}.
The representational power of restricted BMs was also considered from the perspective of its representation efficiency~\cite{martens2013representational}.
In the most challenging scenario, i.e. when visible configurations are assigned random probabilities, the information about the dataset's probability distribution cannot be compressed and the BM needs, in order to properly represent the data, about $2^V$ (where $V$ is the number of visible binary neurons) parameters.
This entails the need of a number of hidden neurons exponential in $V$, $H\sim 2^V$, if one aims to represent \textit{any} probability distribution.
Additionally, in such worst-case scenario, deep BMs do not show any representational advantage over restricted machines, as there are no features which can be abstracted nor any underlying structure can be learned.

The RAPID approach developed in Ref.~\cite{rapid}, primarily consisting in a regularisation of weights (RA), aims at controlling the energy landscape of the model and with it the associated sampling difficulty.
Since the phase space is exponentially large with respect to the total number of neurons $N$, sampling is usually performed by some version of MCMC simulation.
In the case of random independent weights (which is a standard way to initialise the BM models \cite{HintonRBMguidelines}), BMs and classical Hopfield models are equivalent to the non-planar Ising spin-glass model~\cite{RevModPhys.58.801}.
Such systems, at low temperatures, exist in a spin-glass phase where sampling is an NP-complete problem~\cite{Barahona1982complexity}.
This complexity can be traced back to the existence of an exponential number of energy local minima (with respect to $N$), with comparable low energies and separated by high-energy barriers.

The regularisation in RAPID controls the number of low-energy minima by building the weights form patterns, such that: i) patterns should be proxies for configurations corresponding to low-energy minima; and ii) the total number of minima should be proportional to the number of patterns.
Yet, the mathematical form of the Hebbian weights alone does not suffice to guarantee any of these claims.
This is evident in case of classical Hopfield networks, where the system enters a spin-glass phase with exponentially many minima if the number of patterns is greater that $0.14 N$~\cite{Amit1,Amit2}.
Therefore, first we present Theorem \ref{thm:perfect} stating that the RAPID regularisation of BMs can perfectly memorise an arbitrary number of patterns with arbitrarily correlated visible parts.
To this end we present an explicit mathematical construction for a weight-regularised restricted BM whose only energy minima are the aforementioned patterns (in other words, allowing for perfect pattern retrieval).
Importantly, these properties also assure that pattern-guided sampling relevant for regularised BMs can be fast, efficient and precise.
We prove in Theorem \ref{thm:representing} that in such construction one can assign arbitrary probabilities to configurations associated with patterns, using only as many hidden variables as patterns.
This statement draws similarities to those used in the proofs of universal representability of RBMs \cite{le2008representational}, which showed that each hidden unit can model the probability of one visible configuration.

\section{Preliminaries and notation}\label{sec:prelim}
Let us begin by recalling standard restricted BMs, consisting of a bipartite structure with $V$ visible neurons--of which a given configuration is denoted by $\bm{v}=\{v_j\}_{j=1}^V\in\{-1,1\}^V$--and $H$ hidden neurons--for which a configuration is denoted by $\bm{h}=\{h_\alpha\}_{\alpha=1}^H\in\{-1,1\}^H$--, with every neuron in each layer connected to all the neurons in the other, and no neuron connected to any other in the same layer.
From now onward we use the convention that roman subindices always enumerate from $1$ to $V$ while greek subindices always enumerate from $1$ to $H$.
Each configuration is assigned an energy via
\begin{equation}
    E(\bm{v},\bm{h})=-\sum_{j=1}^V\sum_{\alpha=1}^H v_j W_{j,\alpha} h_\alpha -\sum_{j=1}^V b_j v_j -\sum_{\alpha=1}^H c_\alpha h_\alpha,
    \label{eq:energy}
\end{equation}
where $\theta=(\bm{W},\bm{b},\bm{c})$ are the parameters of the model.
The weight matrix $\bm{W}$ describes couplings between neurons and the vectors $\bm{b}$ and $\bm{c}$ describe biases (or, in the context of statistical physics, external fields) acting on, respectively, visible and hidden neurons.
The energy function enables to assign probabilities to system configurations according to the Boltzmann distribution:
\begin{equation}
    P_\theta(\bm{v},\bm{h})=\frac{e^{-E_\theta(\bm{v},\bm{h})}}{\sum_{\bm{v},\bm{h}}e^{-E_\theta(\bm{v},\bm{h})}},
    \label{eq:Boltzmann}
\end{equation}
which upon marginalisation over hidden variables gives the probability of the model producing a given visible configuration:
\begin{equation}
    P_\theta(\bm{v})=\frac{\sum_{\bm{h}}e^{-E_\theta(\bm{v},\bm{h})}}{\sum_{\bm{v},\bm{h}}e^{-E_\theta(\bm{v},\bm{h})}}.
    \label{eq:BoltzmannV}
\end{equation}
The general task of learning with restricted BMs is defined as finding appropriate parameters $\theta$ such that $P_\theta$ captures the probability distribution underlying some dataset.

For a more detailed study of the representative power of regularised RBMs, in our construction we define as $\Omega$ the subset of visible configurations for which the probability distribution will be modeled, i.e. the training dataset.
We refer to $\bm{v}\in\Omega$ as \textit{represented} configurations and, for reasons that will become evident later on, we will choose the number of hidden neurons to satisfy $H=|\Omega|$.
Also it will prove convenient to label represented configurations with greek superindices as $\bm{v}^\alpha$, i.e. $\Omega=\{\bm{v}^\alpha\}_{\alpha=1}^H$.
For every visible configuration $\bm{v}$ there is exactly one configuration which is its inversion, $-\bm{v}$.
We will use the notation $\alpha$ and $-\alpha$ to refer to a pair of mutually inverted configurations in $\Omega$, $\bm{v}^\alpha$ and $-\bm{v}^\alpha$.
For the dot product between two visible represented configurations indexed $\alpha$ and $\beta$ we use the shorthand:
\begin{align}
    \label{eq:Product}
    S_{\alpha,\beta}=\bm{v}^\alpha\cdotp\bm{v}^\beta=\sum_{j} v^\alpha_j v^\beta_j.
\end{align}
We will use the fact that values of $S_{\alpha,\beta}$ form a set of integers between $-V$ and $V$ (included) with increment of 2.
Finally, by $\bm{v}'$ we will denote an arbitrary visible configuration that is not represented i.e. $\bm{v}'\not\in\Omega$.

Since the number of local minima of the energy function will play an important role in our considerations, let us introduce a bound:
\begin{lemma}
    \label{thm:RBMbound}
    For any restricted BM the number of local minima of the energy function is bounded from above by $\min(2^V, 2^H)$.
\end{lemma}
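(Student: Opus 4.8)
The plan is to use the one structural feature that distinguishes a restricted BM from a general one: the absence of couplings within a layer. This makes the energy of Eq.~\eqref{eq:energy} affine in each hidden neuron when the visible configuration is frozen, and affine in each visible neuron when the hidden configuration is frozen, and it is exactly this separability that forces local minima to be ``rigid''. I take a local minimum to mean a configuration $(\bm{v},\bm{h})$ at which flipping any single neuron strictly increases $E$ (equivalently, a stable fixed point of the zero-temperature single-spin-flip dynamics); this is the notion for which the stated bound holds cleanly.

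First I would fix $\bm{v}$ and write $E(\bm{v},\bm{h})=-\sum_\alpha h_\alpha f_\alpha(\bm{v})-\sum_j b_j v_j$ with effective hidden field $f_\alpha(\bm{v})=\sum_j v_j W_{j,\alpha}+c_\alpha$; since there are no hidden-hidden terms this is affine in each $h_\alpha$ separately, with no cross terms. Flipping $h_\alpha$ changes the energy by $2 h_\alpha f_\alpha(\bm{v})$, so at a local minimum one must have $f_\alpha(\bm{v})\neq 0$ and $h_\alpha=\mathrm{sign}\,f_\alpha(\bm{v})$ for every $\alpha$. Hence $\bm{h}$ is a deterministic function of $\bm{v}$ at any local minimum, so the map sending a local minimum to its visible part is injective, and the number of local minima is at most $|\{-1,1\}^V|=2^V$.

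Then I would observe that the bipartite structure is symmetric under exchanging the two layers: freezing $\bm{h}$, the energy is affine in each $v_j$ with effective field $\sum_\alpha W_{j,\alpha}h_\alpha+b_j$, and the identical argument shows $\bm{v}$ is a deterministic function of $\bm{h}$ at any local minimum, giving at most $2^H$ local minima. Combining the two bounds yields $\min(2^V,2^H)$, using no hypothesis on $\bm{W},\bm{b},\bm{c}$ beyond the RBM topology. The only delicate point — and essentially the only place the argument could be contested — is the edge case of a vanishing effective field on some neuron: under the strict definition above it simply cannot occur at a local minimum, so it causes no trouble; with a non-strict definition one would note that such a zero field produces an energy plateau (two configurations of equal energy joined by a single flip) rather than a genuine isolated minimum, so it does not inflate the count in any relevant way.
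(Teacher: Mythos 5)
Your argument is correct and is essentially the paper's own proof: fix the visible layer, use the absence of intra-layer couplings to show the hidden configuration at any local minimum is forced to align with the effective fields (hence at most one minimum per visible configuration, giving $2^V$), and then invoke the bipartite symmetry for the $2^H$ bound. Your treatment of the zero-field edge case matches the paper's remark that such configurations form a single energy plateau and do not inflate the count.
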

\begin{proof}
Consider system configurations of the form $(\bm{x},\bm{h})$ where the visible configuration is fixed to some $\bm{v}=\bm{x}$ while $\bm{h}$ is arbitrary.
The energy of configuration $(\bm{x},\bm{h})$ reads:
\begin{align}\label{eq:model}
    E(\bm{x},\bm{h}) &= -\sum_{\alpha=1}^H (\sum_{j=1}^V x_j W_{j,\alpha} + c_\alpha)h_\alpha +C_0,
\end{align}
where $C_0=-\sum_{j=1}^V b_j x_j $ does not depend on $\bm{h}$.
The energy depends on $\bm{h}$ through a sum over the values of the hidden neurons multiplied by effective fields of strength $(\sum_{j=1}^V x_j W_{j,\alpha} + c_\alpha)$.
The lowest-energy configuration of the form $(\bm{x},\bm{h})$ is thus attained by having the hidden neurons align with their respective effective fields.
The corresponding configuration belongs to a \emph{possible} local minimum of the energy: if some effective fields are zero, there are several configurations with the same energy, all pertaining to one possible local minimum.
Whenever a given possible local minimum is indeed a minimum, there can be no other local energy minima among system configurations of the form $(\bm{x},\bm{h})$.
This is because any configuration of the form $(\bm{x},\bm{h})$ with hidden spins not aligned to their respective effective local fields can have energy lowered by single spin-flips of those hidden spins.
In other words, there is at most one local minimum per visible configuration.
Since there are $2^V$ visible configurations and any configuration has to have a visible part, we get a bound of $2^V$ on the number of local minima of the energy function.
Due to the bipartite structure of RBMs, the same analysis can be performed for system configurations when the hidden neurons are fixed and the visible neurons are free, obtaining in that case the bound of $2^H$.
\end{proof}

As commented in Section \ref{sec:intro}, Ref.~\cite{rapid} introduced a method to control the number of energy minima of restricted BMs by regularizing the weight matrix $\bm{W}$.
This was achieved by constructing $\bm{W}$ from a set of system configurations, called patterns, $\{\bm{\xi}^\eta\in\{-1,1\}^{V+H}\}_{\eta=1}^K$:
\begin{equation}
    W_{j,\alpha} = A\sum_{\eta=1}^K \xi^\eta_j\xi^\eta_{\alpha},
    \label{eq:ra}
\end{equation}
where $\xi^\eta_j$ and  $\xi^\eta_{\alpha}$ are the values of visible and hidden neurons in pattern $\bm{\xi}^\eta$, pertaining to sites $j$ and $\alpha$ respectively. The total number of patterns is denoted as $K$. The amplitude $A$ adjusts the typical size of the weights, therefore it also controls the ``temperature''.
Values $A>1$ correspond to low effective temperatures, while values $A<1$ denote high effective temperatures.
While the visible parts of the patterns can be related to the data, the hidden parts are not (in contrast, in Hopfield models all patterns are related to the data being modeled), and their only role is to structure the energy landscape.
In practice, when training restricted BMs where Eq.~\eqref{eq:ra} applies the trainable parameters are not the weights $\bm{W}$ but the patterns $\bm{\xi}^\eta$.
However, in the construction that we present here we will match the visible parts of the patterns to the represented visible configurations, and then choose the hidden parts accordingly.

\section{Construction and proofs}\label{sec:proofs}
This section contains the core of the work, namely (i) the proofs that the restricted machines proposed in Ref.~\cite{rapid} have the ability of universal approximation, and (ii) its connection to Dense Associative Memories.

In order to begin, we impose a reflection symmetry constraint on the set of represented visible configurations $\Omega$: $\bm{v}\in\Omega\Leftrightarrow -\bm{v}\in\Omega$.
This is, we consider the case where visible configurations are represented in pairs, labeled $\alpha, -\alpha$.
Although this may initially seem to be a restriction in the families of models that we consider, the probabilities that are assigned, $P_\theta(\bm{v}^\alpha)$ and $ P_\theta(-\bm{v}^\alpha)$, can be different and arbitrary.
Therefore, this choice merely simplifies calculations at the cost of a possible (linear) overhead in terms of hidden neurons.
This overhead may arise if, for some pair of configurations, $P_\textrm{data}(\bm{v}^\alpha) \gg P_\textrm{data}(-\bm{v}^\alpha)$ because data configurations with very low probability are usually not modeled.
Still, our construction requires the additional hidden unit for $-\bm{v}^\alpha$.
Note, nonetheless, that the overhead vanishes when we consider the worst-case scenario for data, where all configurations in $\{-1,+1\}^V$ have non-zero probability.

We now proceed to give a construction that allows to assign, to any dataset $\Omega=\{\bm{v}^\eta\}_{\eta=1}^H$ such that $\bm{v}\in\Omega\Leftrightarrow -\bm{v}\in\Omega$ and with associated probabilities given by $P_\textrm{data}(\bm{v}^\eta)$, an expression of such distribution in the form of Eq.~\eqref{eq:BoltzmannV} with the weight matrix satisfying Eq.~\eqref{eq:ra}.
Without loss of generality, we set amplitude $A>0$, visible biases to zero, ${\bm b} = 0$, and choose the number of hidden neurons and patterns $K=H=|\Omega|$.
We also choose the patterns to have the values:
\begin{align}
    \label{eq:FCpatterns}
    \xi^\eta_j=v^\eta_j,\ \ \ \xi^\eta_\alpha=2\delta_{\alpha \eta}-1,
\end{align}
where $\delta_{a b}$ is the Kronecker delta.
The visible part of pattern $\bm{\xi}^\eta$ is set to the \emph{represented} data configuration $\bm{v}^\eta$ whose probability we aim to model.
The hidden part of the pattern $\bm{\xi}^\eta$, to which we will refer as $\bm{h}^\eta$, has uniform values $-1$ except for the site $\eta$, making the hidden parts of the patterns strongly correlated.
We choose this form of hidden parts of patterns for convenience in the proofs, but they can be chosen in many ways, depending on the particular construction.

Using this condition and substituting Eq.~\eqref{eq:FCpatterns} into weight $W_{j,\alpha}$ given by Eq.~\eqref{eq:ra} we find that, for every pair of represented configurations, $\beta$ and $-\beta$, that is not the pair $\{\alpha,-\alpha\}$, the corresponding contributions to $W_{j,\alpha}$ have opposite signs on visible site $j$ (i.e. $\xi^\beta_j=-\xi^{-\beta}_j$) but the same sign on hidden site $\alpha$ (this is, $\xi^\beta_\alpha=\xi^{-\beta}_\alpha=-1$).
In result, the sum of the contributions to $W_{j,\alpha}$ from such pair is zero.
Only the pair $\{\alpha,-\alpha\}$ has a non-zero contribution to $W_{j,\alpha}$, resulting in
\begin{align}
    \label{eq:FCweightsfinal}
    W_{j,\alpha}=A(v^\alpha_j-v^{-\alpha}_j)=2A v^\alpha_j.
\end{align}
This is, all weights take values $\pm 2A$.
Next, we re-scale hidden biases by the same factor $2A$ and express them as:
\begin{align}
    \label{eq:FCcfinal}
    c_{\alpha}=2A (\lambda_\alpha - V),
\end{align}
imposing the additional constraint $\frac34<\lambda_\alpha <\frac54$.
The consequences and the reasons for this constraint will become clear later on.

In the following we prove five important properties related to the construction we propose.
Properties \ref{thm:FCunique1}, \ref{thm:FCunique10} and \ref{thm:FCunique4} assert that its regularisation is effective in controlling the local minima structure of the model.
Additionally, as we formally state in Theorem \ref{thm:perfect}, this can be seen as a proof of the storage power of general regularised BMs.
Then, properties \ref{thm:FCunique3} and \ref{thm:FCunique2} allow us to give a proof of universal representability of general regularised RBMs, formulated in Theorem \ref{thm:representing}.
In this section, we focus mostly on the technical aspects of these properties and theorems, while general discussion is deferred to Section~\ref{sec:discussion}.
\begin{Property}
    \label{thm:FCunique1}
    The patterns given by Eq.~\eqref{eq:FCpatterns} of the regularised RBM model given by Eqs.~\eqref{eq:model}-\eqref{eq:ra} are local minima of the associated energy function.
\end{Property}
\begin{proof}
    The proof will rely on showing that any single spin flip from any given pattern, $\bm{\xi}^\eta = (\bm{v}^\eta,\bm{h}^\eta)$, increases its energy.
    First consider the energy of any configuration $(\bm{v}^\eta,\bm{h})$ where $\bm{h}$ is arbitrary.
    Using Eqs.~\eqref{eq:FCweightsfinal} and \eqref{eq:FCcfinal} we get
    \begin{align}
        \label{eq:FCEcondV}
        E(\bm{v}^\eta,\bm{h}) &= - \sum_{j,\alpha}W_{j,\alpha}v_j^\eta h_\alpha -\sum_\alpha c_\alpha h_\alpha \nonumber \\ &=- \sum_{j,\alpha}2 A v^\alpha_j v^\eta_j h_{\alpha}-\sum_\alpha c_\alpha h_\alpha \nonumber \\ 
        &= - \sum_{\alpha} 2A\Bigr(S_{\alpha,\eta}-V +\lambda_\alpha    \Bigl)h_\alpha.
    \end{align}
    Note that this is a sum of the values of the hidden neurons multiplied by local effective fields given by $2A(S_{\alpha,\eta}-V+\lambda_\alpha)$.
    For $\alpha\neq\eta$ we have $S_{\alpha,\eta}-V\leq-2$ and since $\frac34<\lambda_\alpha <\frac54$, the local field is negative.
    In contrast, for $\alpha=\eta$ the local field reduces to $\lambda_\eta$, which is positive.
    Therefore, there is only one hidden configuration,
    $(\bm{h}|\bm{v}^\eta)_{\text{min}}$,
    that minimizes $E(\bm{v}^\eta,\bm{h})$ by aligning to local fields, namely
    \begin{align}
        \label{eq:FCHminV}
        (h_{\alpha}|\bm{v}^\eta)_{\text{min}}=2\delta_{\alpha \beta}-1\equiv\xi^\eta_\alpha,
    \end{align}
    that is, the hidden neuron configuration contained in the pattern $\eta$.
    Furthermore, flipping the hidden spin $h_\alpha$ in the pattern $\eta$ results in a configuration with energy increased by twice the magnitude of the local effective field acting at site $\alpha$:
    \begin{align}
        \label{eq:FCdeltah}
        \Delta E=4A\mid S_{\alpha,\eta}-V+\lambda_\alpha\mid >3A.
    \end{align}
    This inequality follows from $\frac34<\lambda_\alpha <\frac54$ and holds regardless of $\alpha\neq\eta$ or $\alpha=\eta$.

    Consider now the energy of any configuration $(\bm{v},\bm{h}^\eta)$ where $\bm{v}$ is arbitrary.
    Using Eqs.~\eqref{eq:FCpatterns}, \eqref{eq:FCweightsfinal} and \eqref{eq:FCcfinal} we find:
    \begin{align}
        \label{eq:FCEcondH}
        E(\bm{v},\bm{h}^\eta) = - \sum_{j,\alpha}W_{j,\alpha}v_j (2\delta_{\alpha \eta}-1) -\sum_\alpha c_\alpha(2\delta_{\alpha \eta}-1) \nonumber \\ 
        =- \sum_{j}4A v^\eta_j v_j+ \sum_{j,\alpha} 2A v^\alpha_j v_j-\sum_\alpha c_\alpha(2\delta_{\alpha \eta}-1).
    \end{align}
    The second term is zero since every pair of mutually inverted, represented, visible configurations, $\alpha$ and $-\alpha$, have opposite signs on the visible site $j$.
    Note also that the third term does not depend on the configuration $\bm{v}$.
    The first term is a sum over the values of the visible neurons, $v_j$, multiplied by local effective fields given by $4Av^\eta_j$.
    Therefore, there is only one visible configuration
    $(\bm{v}|\bm{h}^\eta)_{\text{min}}$ that minimizes $E(\bm{v},\bm{h}^\eta)$, namely the one obtained by aligning to the local fields:
    \begin{align}
        \label{eq:FCVminH}
        (v_{j}|\bm{h}^\eta)_{\text{min}}=v_j^\eta.
    \end{align}
    This is the configuration of the visible neurons contained in the pattern $\eta$.
    Flipping the visible spin $v_j$ in pattern $\eta$ results in configuration with energy increased by twice the magnitude of the local effective field acting at site $j$, namely
    \begin{align}
        \label{eq:FCdeltaj}
        \Delta E=8A\mid v_j^{\eta}\mid =8A.
    \end{align}
\end{proof}
\begin{Property}
    \label{thm:FCunique10}
    For a given visible, not-represented configuration $\bm{v}'\notin\Omega$, the lowest energy attainable by configurations of the form $(\bm{v}',\bm{h})$, where $\bm{h}$ is arbitrary, is equal to $E_0=-2A\sum_{\alpha} (V-\lambda_\alpha)$.
\end{Property}
\begin{proof}
    Using Eqs.~\eqref{eq:FCweightsfinal} and \eqref{eq:FCcfinal} we get:
    \begin{align}
        \label{eq:FCEcondVprim}
        E(\bm{v}',\bm{h}) &= - \sum_{j,\alpha}W_{j,\alpha}v'_j h_\alpha -\sum_\alpha c_\alpha h_\alpha \nonumber \\ &=- \sum_{j,\alpha}2 A v^\alpha_j v'_j h_{\alpha}-\sum_\alpha c_\alpha h_\alpha \nonumber \\ 
        &= - \sum_{\alpha} 2A\Bigr(\bm{v}'\cdot\bm{v}^\alpha-V +\lambda_\alpha    \Bigl)h_\alpha.
    \end{align}
    Again, we find a sum over values of hidden neurons multiplied by local effective fields given by $2A(\bm{v}'\cdot\bm{v}^\alpha-V+\lambda_\alpha)$.
    Since $\bm{v}^\alpha\neq\bm{v}'$ for any $\alpha$ we have $\bm{v}'\cdot\bm{v}^\alpha-V\leq-2$.
    In combination with $\frac34<\lambda_\alpha <\frac54$, we see that all local fields are negative.
    Therefore, there is only one hidden configuration $(\bm{h}|\bm{v}')_{\text{min}}$ that minimizes $E(\bm{v}',\bm{h})$ by aligning to local fields, which has elements $ (h_{\alpha}|\bm{v}')_{\text{min}}=-1$ for all $\alpha$.
    We denote such hidden configuration as $-\bm{e}$.
    
    Let us now consider the energy of configuration $(\bm{v},-\bm{e})$ where $\bm{v}$ is arbitrary.
    Using Eqs.~\eqref{eq:FCweightsfinal} and \eqref{eq:FCcfinal} we get:
    \begin{align}
        \label{eq:FCEcondHprim}
        E(\bm{v},-\bm{e})&= - \sum_{j,\alpha}W_{j,\alpha}v_j (-1) -\sum_\alpha c_\alpha(-1) \nonumber \\ 
        &=\sum_{j,\alpha} 2A v^\alpha_j v_j+\sum_{\alpha} 2A\Bigr(\lambda_\alpha-V    \Bigl)\nonumber \\ 
        &=2A\sum_{\alpha}\bm{v}^\alpha \cdot \bm{v} +E_0 = E_0,
    \end{align}
    where $E_0=-2A\sum_{\alpha} (V-\lambda_\alpha)$.
    With the condition $\frac34<\lambda_\alpha <\frac54$ we see that $E_0<0$ if $V\geq 2$.
    The first term in the last line of Eq.~\eqref{eq:FCEcondHprim} is zero due to cancellations between every pair of mutually inverted, represented, visible configurations.
    Note that the energy of configuration $(\bm{v},-\bm{e})$ does not depend on $\bm{v}$.
    This implies that $E_0$ is the smallest energy attainable by the configurations of the form $(\bm{v}',\bm{h})$.
\end{proof}
\begin{Property}
    \label{thm:FCunique4}
    There are no other local minima of the energy function in the regularised RBM model of Eqs.~\eqref{eq:model}-\eqref{eq:ra} than the patterns given by Eq.~\eqref{eq:FCpatterns}.
\end{Property}
\begin{proof}
    For this proof we will use properties \ref{thm:FCunique1} and \ref{thm:FCunique10} to consider two complementary cases: the case of the form $(\bm{v}^\eta,\bm{h})$ for any $\eta$ and where $\bm{h}$ is arbitrary, and the case of the form $(\bm{v}',\bm{h})$ for any $\bm{v}'$ and where $\bm{h}$ is again arbitrary.
    
    In the first case, using Eqs.~\eqref{eq:FCEcondV}-\eqref{eq:FCHminV}, we conclude that pattern $\bm{\xi}^\eta$ is the only configuration whose energy is a local minimum among all configurations of the form $(\bm{v}^\eta,\bm{h})$.
    
    In the second case, from Eq.~\eqref{eq:FCEcondVprim} we conclude that among all the configurations of the form $(\bm{v}',\bm{h})$, only the configuration $(\bm{v}',-\bm{e})$ can be a candidate for a local minimum of the energy.
    However, this configuration is not a local minimum because we can lower the energy of the system by a sequence of single spin flips, starting from configuration $(\bm{v}',-\bm{e})$, without ever increasing energy during that process.
    As shown by Property \ref{thm:FCunique10} and Eq.~\eqref{eq:FCEcondHprim}, the energy of the configuration $(\bm{v}',-\bm{e})$ is equal to $E_0$, and so is the energy of any other configuration of the form $(\bm{v},-\bm{e})$.
    Therefore, using single visible-spin flips we can transition, with no energy cost, to configuration $(\bm{v}^\eta,-\bm{e})$ for some $\eta$.
    The latter differs from pattern $\bm{\xi}^\eta=(\bm{v}^\eta,\bm{h}^{\eta})$ only in the hidden spin at position $\eta$.
    Since pattern $\bm{\xi}^\eta$ is a local minimum by Property \ref{thm:FCunique1}, any configuration differing by a single spin flip must have higher energy.
    Thus, flipping the hidden spin at position $\eta$ in configuration $(\bm{v}^\eta,-\bm{e})$ must lower its energy.
\end{proof}

From the properties above, we introduce now our first main result:
\begin{theorem}
    \label{thm:perfect}
    A Boltzmann machine with regularised weights as in Eq.~\eqref{eq:ra} can perfectly memorise any set of visible configurations $\Omega$ that satisfies $\bm{v}\in\Omega\Leftrightarrow -\bm{v}\in\Omega$.
    For that it suffices to use $|\Omega|$ hidden neurons, i.e. one hidden unit per memorised visible configuration.
\end{theorem}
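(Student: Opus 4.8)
The plan is to observe that Theorem~\ref{thm:perfect} is essentially a repackaging of Properties~\ref{thm:FCunique1} and~\ref{thm:FCunique4} applied to the explicit construction of Section~\ref{sec:proofs}. First I would fix the meaning of ``perfectly memorise'': given $\Omega$ with $\bm v\in\Omega\Leftrightarrow-\bm v\in\Omega$ and $|\Omega|=H$, I want to exhibit a weight matrix of the regularised form~\eqref{eq:ra} (together with biases) such that the set of local minima of the energy~\eqref{eq:energy} is in one-to-one correspondence with $\Omega$ via restriction to the visible neurons, and such that no additional (``spurious'') minima exist. Perfect retrieval is then immediate: since by Property~\ref{thm:FCunique1} each pattern is a \emph{strict} local minimum, zero-temperature single-spin-flip descent initialised at any stored visible configuration completed by its pattern's hidden part stays put, and descent from a single-flip neighbourhood returns to it.

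Then I would carry out the following steps. First, take the construction of Section~\ref{sec:proofs}: set $A>0$, $\bm b=0$, $K=H=|\Omega|$, patterns $\bm\xi^\eta$ as in Eq.~\eqref{eq:FCpatterns}, and hidden biases $c_\alpha=2A(\lambda_\alpha-V)$ with $\frac34<\lambda_\alpha<\frac54$; by Eq.~\eqref{eq:FCweightsfinal} the weights reduce to $W_{j,\alpha}=2Av^\alpha_j$, so they are manifestly of the regularised form~\eqref{eq:ra}. Second, invoke Property~\ref{thm:FCunique1}: each pattern $\bm\xi^\eta=(\bm v^\eta,\bm h^\eta)$ is a local minimum, with strictly positive energy gaps ($>3A$ for a hidden flip, $=8A$ for a visible flip) to all its single-flip neighbours, so every $\bm v^\eta\in\Omega$ appears as the visible part of a strict local minimum. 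Third, invoke Property~\ref{thm:FCunique4}: the $\bm\xi^\eta$ are the \emph{only} local minima. Since distinct represented configurations are distinct by definition of the set $\Omega$ and the hidden parts $\bm h^\eta$ are pairwise distinct, the map $\eta\mapsto\bm\xi^\eta$ is injective, hence there are exactly $H=|\Omega|$ local minima and restriction to the visible layer is a bijection between them and $\Omega$. Fourth, count resources: the construction uses exactly $H=|\Omega|$ hidden neurons. This establishes the claim.

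As for the hard part: in the present form of the theorem nearly all the work has already been done in Properties~\ref{thm:FCunique1} and~\ref{thm:FCunique4}, and the theorem is their corollary; the only points to be careful about are verifying that the reduced weights indeed satisfy~\eqref{eq:ra} (immediate from Eq.~\eqref{eq:FCweightsfinal}) and that the visible-to-minima correspondence is a genuine bijection rather than merely a surjection, which needs the trivial remark that $\Omega$ is a set and the $\bm h^\eta$ are distinct. If one were instead proving the statement from scratch, the genuine obstacle is the content of Property~\ref{thm:FCunique4}: ruling out spurious minima sitting over non-represented visible configurations $\bm v'\notin\Omega$. The crucial device there is that for every such $\bm v'$ the optimal hidden response is the all-$(-1)$ configuration $-\bm e$ (Property~\ref{thm:FCunique10}), that the energy of $(\bm v,-\bm e)$ equals the constant $E_0$ independently of $\bm v$ because of the pairwise cancellations enforced by the reflection symmetry of $\Omega$, and that this flat plateau connects --- by zero-cost visible flips --- to a neighbour of some pattern $\bm\xi^\eta$, from which a single hidden flip strictly lowers the energy. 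I would make sure the write-up flags that this plateau argument, and hence the reflection-symmetry hypothesis on $\Omega$, is precisely what excludes spurious states.
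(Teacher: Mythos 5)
Your overall strategy coincides with the paper's: both rest on the explicit construction of Section~\ref{sec:proofs} and on Properties~\ref{thm:FCunique1}, \ref{thm:FCunique10} and \ref{thm:FCunique4}, and your treatment of \emph{storage} (patterns are the only local minima, one hidden unit per element of $\Omega$) matches the paper's points (i) and (ii). The gap is in \emph{retrieval}. You dispose of it with the claim that ``descent from a single-flip neighbourhood returns to it'', but this is false for the zero-temperature Gibbs dynamics the paper actually uses, and it is delicate even for single-spin-flip descent. If you flip one visible spin of a pattern $\bm{\xi}^\eta$, the resulting visible configuration $\bm{v}'$ is (generically) not in $\Omega$; by Eq.~\eqref{eq:FCEcondVprim} every hidden effective field is then negative (for $\alpha=\eta$ it equals $2A(-2+\lambda_\eta)<0$ since $\lambda_\eta<\tfrac54$), so the hidden layer collapses to $-\bm{e}$, and by Property~\ref{thm:FCunique10} the system lands on the flat plateau of energy $E_0$ where the visible units feel zero field and randomise. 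The dynamics then reaches \emph{some} stored configuration, but not necessarily $\bm{\xi}^\eta$. Even under single-spin-flip descent with random move selection, the hidden flip at site $\eta$ (which lowers the energy by $4A(2-\lambda_\eta)>0$) competes with the corrective visible flip, and can route the trajectory onto the same plateau. This is precisely why the paper states that the attraction basins are only of size $V$.

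The paper avoids this by defining retrieval through a different device: an input $\bm{v}_{\textrm{in}}$ equal to $\bm{v}^\eta$ except for a single \emph{undetermined} spin $v_j=0$. Then $\bm{v}_{\textrm{in}}\cdot\bm{v}^\eta-V=-1$, so the field on hidden unit $\eta$ is $2A(-1+\lambda_\eta)$, which is positive only if $\lambda_\eta>1$ --- hence the paper's additional choice $\lambda_\alpha=\tfrac{6}{5}$, which your proposal never needs and never makes. With that choice the hidden layer recovers $\bm{h}^\eta$ and the subsequent visible update restores $\bm{v}^\eta$. So to close your argument you must either (a) adopt the paper's undetermined-spin notion of retrieval together with the constraint $\lambda_\alpha>1$ and the non-degeneracy assumption that no other stored configuration agrees with $\bm{v}^\eta$ off site $j$, or (b) state explicitly that your notion of ``perfect memorisation'' is only the fixed-point/no-spurious-minima statement, which is weaker than what the paper proves and silently drops the error-correction aspect of associative memory.
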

\begin{proof}
Here perfect memorisation of the set is understood as the storage and the perfect retrieval of any of it elements.
Since $\Omega$ contains only visible configurations, we use these as the visible parts of system configurations.
More concretely, the configurations we choose are the set of all patterns given by Eq.~\eqref{eq:FCpatterns} whose visible parts match the elements of $\Omega$.
Furthermore we set all $\lambda_\alpha=\frac{6}{5}$.

Without loss of generality it is sufficient to consider regularised RBMs since every general BM can be reduced to an RBM.
The construction proposed here enables perfect memorisation of the set $\Omega$.
This is, under zero-temperature Markov Chain Monte Carlo (MCMC) dynamics: (i) if the visible part of a configuration is stored in the model, it will not change, (ii) if the visible part of a configuration is not stored in the model, it will change randomly under MCMC dynamics until a stored configuration is reached, (iii) for a configuration $\eta$ whose visible part is stored except for a single \emph{undetermined} spin (i.e., $v_j=0$ for one single $j$), the dynamics can be used to retrieve the stored visible configuration.
For this point we explicitly assume that there is no other stored configuration $\eta'$ that differs from $\eta$ {\it only} at site $j$, since this information already determines the value of $v_j$ for $\eta$.

For showing (i), we note that if the current configuration is equal to one of the patterns, Property \ref{thm:FCunique1} ensures that its energy is a local minimum of Eq.~\eqref{eq:model}, and hence it is also a fixed point of the zero-temperature MCMC dynamics.
Further, consider starting in a configuration where the visible part matches a memorised configuration $\eta$, $(\bm{v}^\eta,\bm{h})$, but the hidden part is arbitrary.
Using Gibbs sampling, after the first sampling of hidden variables these align with the local fields (recall Eq.~\eqref{eq:FCEcondV}) and the hidden configuration changes into the hidden part of pattern $\eta$ (Eq.~\eqref{eq:FCHminV}).
Thus, the configuration is now pattern $\eta$, and we reached a fixed point of the dynamics.
Note that only the hidden part of the original configuration has changed.

For demonstrating (ii), consider any configuration whose visible part is not a memorised state, $(\bm{v}',\bm{h})$, and the hidden part is arbitrary.
After sampling the hidden neurons for the first time, they align with local fields (recall Eq.~\eqref{eq:FCEcondVprim}), so the hidden configuration becomes $-\bm{e}$.
Since the energy of any system configuration of the form $(\bm{v},-\bm{e})$ with arbitrary $\bm{v}$ is always $E_0$ per Property \ref{thm:FCunique10}, there is no energy gradient for the subsequent sampling of visible neurons, leading to a random visible configuration.
If such random visible configuration happens to match a stored visible configuration we arrive at point (i).
Otherwise, we obtain a new state of the form $(\bm{v}',\bm{h})$, where now $\bm{h}=-\bm{e}$ but $\bm{v}'$ is still not memorised.
Repeating the Gibbs sampling procedure eventually leads to a stored configuration.

For proving (iii), consider a visible input $\bm{v}_{\textrm{in}}$ that is equal to $\bm{v}^\eta$ for some $\eta$ except for a single undetermined visible spin at position $j$.
This is, the value of this spin is $0$.
Note that this is not a valid configuration (recall that $v_j\in\{-1,1\}$), but can be set as starting point for MCMC dynamics.
Consider now the energy of the input configuration $(\bm{v}_{\textrm{in}},\bm{h})$ whose hidden part is arbitrary: 
\begin{align}
    \label{eq:FCEcondVany}
    E(\bm{v}_{\textrm{in}},\bm{h}) = - \sum_{\alpha} 2A\Bigr(\bm{v}_{\textrm{in}}\cdot\bm{v}^\alpha-V +\lambda_\alpha    \Bigl)h_\alpha.
\end{align}
This has the form of a sum over the value   s of hidden neurons multiplied by local effective fields given by $\bm{v}_{\textrm{in}}\cdot\bm{v}^\alpha-V +\lambda_\alpha$.
The value of $\bm{v}_{\textrm{in}}\cdot\bm{v}^\alpha - V$ is equal to $-1$ for $\alpha=\eta$ (the $j$-th site contributes $0$ to the scalar product) and is no greater than $-3$ for any $\alpha\neq\eta$.
After the first sampling of hidden neurons, these align with local fields that are negative for all $\alpha\neq\eta$ except for $\alpha=\eta$.
In this case, as we constrained $\lambda_\alpha=\frac{6}{5}$, the local fields are positive.
Therefore, the configuration of the hidden neurons changes into $\bm{h}^\eta$, the hidden part of pattern $\eta$.
Then, the subsequent sampling of the visible neurons will restore the visible configuration $\bm{v}^\eta$ according to Eq.~\eqref{eq:FCEcondH}, and the whole system will have reached a fixed point of the dynamics.
\end{proof}

Next, we introduce two new properties which will help us to show that one can assign arbitrary probabilities to the memorised patterns.
We introduce also the symbol $\mathcal{E}$ to denote the energy of a pattern, i.e., $\mathcal{E}_\eta\equiv E(\bm{v}^\eta,(\bm{h}|\bm{v}^\eta)_{\text{min}})$.
\begin{Property}
    \label{thm:FCunique3}
    The difference in energy between patterns, and their probability ratios, depends only on differences of parameters $\lambda$ multiplied by $4A$ and read:
\begin{align}
    \label{eq:FCpatternrelative}
    \mathcal{E}_\alpha - \mathcal{E}_\beta = 4A(\lambda_\beta -\lambda_\alpha),\nonumber\\ 
    \frac{P(\bm{\xi}^\alpha)}{P(\bm{\xi}^\beta)}=\frac{e^{ -\mathcal{E}_\alpha}}{ e^{-\mathcal{E}_\beta}}=e^{-4A(\lambda_\beta -\lambda_\alpha)}.
\end{align}
\end{Property}
\begin{proof}
Let us explicitly write the energy of a pattern $\eta$ using last line of Eq.~\eqref{eq:FCEcondV}:
\begin{align}
    \label{eq:FCpatternEn}
    \mathcal{E}_\eta &\equiv 
    E(\bm{v}^\eta,(\bm{h}|\bm{v}^\eta)_{\text{min}}) \nonumber \\
    &= - \sum_{\alpha} 2A\Bigr(S_{\alpha,\eta}-V +\lambda_\alpha    \Bigl)(2\delta_{\alpha \eta}-1)\nonumber\\
    &=-4A\lambda_\eta + \sum_{\alpha} 2A \Bigr(S_{\alpha,\eta}-V +\lambda_\alpha    \Bigl)\nonumber\\
    &=-4A\lambda_\eta+E_0,
\end{align}
where we used the fact that $\sum_{\alpha} S_{\alpha,\eta}=0$ due to cancellations occurring in pairs of mutually inverted, represented, visible configurations.
Then, Equation \eqref{eq:FCpatternrelative} follows directly from Eq.~\eqref{eq:FCpatternEn} and Eq.~\eqref{eq:Boltzmann}.
\end{proof}
\begin{Property}
    \label{thm:FCunique2}
   The energy of any pattern is lower than the energy of any non-pattern configuration by a magnitude of, at least, $A$.
\end{Property}
\begin{proof}
Following Eq.~\eqref{eq:FCpatternEn} and using the constraint $\frac34<\lambda_\alpha <\frac54$ for any $\alpha$, we conclude that the energies of all patterns are located in an energy band between $E_0-5A$ and $E_0-3A$.

As in the proof of Property \ref{thm:FCunique4}, we consider separately the case where the configuration has the form $(\bm{v}^\eta,\bm{h})$ for any $\eta$ and $\bm{h}$ is arbitrary, and case where the configuration is of the form $(\bm{v}',\bm{h})$ for any $\bm{v}'$ and $\bm{h}$ is arbitrary.

In the first case we can use the results of the proof of Property~\ref{thm:FCunique1}, namely that (i) pattern $\eta$ has the lowest energy among states $(\bm{v}^\eta,\bm{h})$, and (ii) the energies of hidden-spin excitations above pattern $\eta$ are given by Eq.~\eqref{eq:FCdeltah}.
Taking into account that the energy band of patterns has width $2A$, all the energies of hidden-spin excitations above pattern $\eta$ lay at least $A$ above the energy band of patterns.

In the second case, Property~\ref{thm:FCunique10} states that the lowest energy among configurations $(\bm{v}',\bm{h})$ is equal to $E_0$.
This is $3A$ above the energy band of patterns.
\end{proof}

With these two properties, we can now formulate and proof a theorem of universal representability.
\begin{theorem}
    \label{thm:representing}
    Consider an arbitrary probability distribution, $P_{\textrm{data}}(\bm{v})$, whose domain $\Omega$ is an arbitrary set of configurations $\bm{v}\in\{-1,1\}^V$ of $V$ binary neurons, satisfying $\bm{v}\in\Omega\Leftrightarrow -\bm{v}\in\Omega$.
    A Boltzmann machine with $V$ visible neurons, $H=|\Omega|$ hidden neurons, and weights built from patterns as in Eq.~\eqref{eq:ra} can model arbitrarily close such probability distribution using $K=|\Omega|$ patterns, while having exactly $H$ energy local minima given by the corresponding patterns.
\end{theorem}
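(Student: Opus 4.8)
The plan is to reuse the explicit construction of Eqs.~\eqref{eq:FCpatterns}--\eqref{eq:FCcfinal} and merely tune the free parameters $\lambda_\alpha$ so as to encode the target probabilities, exploiting two facts already in hand: Property~\ref{thm:FCunique3} gives \emph{exact} control over the probability ratios between patterns, and Property~\ref{thm:FCunique2} guarantees that the patterns carry essentially all of the probability mass. Since an RBM is a special case of a general BM, exhibiting a regularised RBM suffices. The statement about local minima then requires no temperature control: for \emph{any} admissible choice $\frac34<\lambda_\alpha<\frac54$, Properties~\ref{thm:FCunique1} and \ref{thm:FCunique4} already say that the complete set of energy minima is exactly the $K=H=|\Omega|$ patterns, which are pairwise distinct because both their visible parts (the elements of $\Omega$) and their hidden parts (differing in the position of the unique up-spin) are distinct.

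First I would treat the case $P_{\textrm{data}}(\bm{v}^\eta)>0$ for all $\eta$; a distribution with zeros is handled at the end by a routine limiting argument, replacing $P_{\textrm{data}}$ by a strictly positive $\tilde P$ with $\|\tilde P-P_{\textrm{data}}\|_1$ as small as desired. Writing $G=(\prod_\eta P_{\textrm{data}}(\bm{v}^\eta))^{1/H}$ for the geometric mean of the target values, I set
\begin{align}
    \lambda_\eta = 1 + \frac{1}{4A}\log\frac{P_{\textrm{data}}(\bm{v}^\eta)}{G}.
    \label{eq:proposal-lambda}
\end{align}
The log-ratios on the right lie in a fixed bounded interval, so for $A$ large enough every $\lambda_\eta$ falls strictly inside $(\frac34,\frac54)$ and the whole construction, together with Properties~\ref{thm:FCunique1}--\ref{thm:FCunique2}, applies. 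Combining \eqref{eq:proposal-lambda} with $\mathcal{E}_\eta=-4A\lambda_\eta+E_0$ (cf.~Eq.~\eqref{eq:FCpatternEn}) and Property~\ref{thm:FCunique3}, the Boltzmann weights restricted to the patterns reproduce $P_{\textrm{data}}$ exactly: $e^{-\mathcal{E}_\eta}/\sum_\beta e^{-\mathcal{E}_\beta}=P_{\textrm{data}}(\bm{v}^\eta)$.

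It then remains to show that the full marginal $P_\theta(\bm{v}^\eta)$ of Eq.~\eqref{eq:BoltzmannV} differs from this pattern-only expression by an amount that vanishes as $A\to\infty$. I would split the partition function into the pattern part $Z_{\textrm{pat}}=\sum_\eta e^{-\mathcal{E}_\eta}$ and a remainder $R$ gathering all non-pattern configurations --- those of the form $(\bm{v}^\eta,\bm{h})$ with $\bm{h}\neq\bm{h}^\eta$ together with all $(\bm{v}',\bm{h})$ with $\bm{v}'\notin\Omega$. Property~\ref{thm:FCunique2} places every such configuration at least $A$ in energy above every pattern; combined with $\mathcal{E}_\eta<E_0-3A$ this yields $R/Z_{\textrm{pat}}\le 2^{V+H}e^{-A}$ and, for the numerator $\sum_{\bm{h}}e^{-E(\bm{v}^\eta,\bm{h})}=e^{-\mathcal{E}_\eta}+r_\eta$ of $P_\theta(\bm{v}^\eta)$, the bound $r_\eta\le 2^{H}e^{-A}e^{-\mathcal{E}_\eta}$. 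Substituting into $P_\theta(\bm{v}^\eta)=(e^{-\mathcal{E}_\eta}+r_\eta)/(Z_{\textrm{pat}}+R)$, summing over the $H$ patterns, and adding the mass $\le R/Z$ that lands on non-represented configurations, gives $\|P_\theta-P_{\textrm{data}}\|_1\le c\,e^{-A}$ with $c$ depending only on $V$ and $H$. Because $V$ and $H$ are fixed, any prescribed accuracy is reached by taking $A$ large.

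The point I would be most careful about is the double role played by the amplitude $A$: it has to be large enough to suppress the non-pattern weight below the target tolerance, while the very same growth is what shrinks the spread of the $\lambda_\eta$ about $1$ and so keeps them inside the window $(\frac34,\frac54)$ that the construction demands. The observation that both requirements push $A$ in the same direction --- so they never conflict --- is what makes the argument go through; after that, the estimates above are entirely routine.
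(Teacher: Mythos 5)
Your proposal is correct and follows essentially the same route as the paper's proof: exhibit the explicit construction, invoke Property~\ref{thm:FCunique3} to fix the pattern probability ratios via the $\lambda_\alpha$, and invoke Property~\ref{thm:FCunique2} with $A\gg1$ to suppress all non-pattern configurations. The only difference is cosmetic --- you give a closed-form assignment $\lambda_\eta=1+\tfrac{1}{4A}\log(P_{\textrm{data}}(\bm{v}^\eta)/G)$ and explicit $O(e^{-A})$ error bounds where the paper assigns the $\lambda_\alpha$ recursively and leaves the tail estimate qualitative, so your write-up is if anything slightly more detailed than the original.
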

\begin{proof}
    The set of all patterns given by Eq.~\eqref{eq:FCpatterns} whose visible parts match elements of $\Omega$ suffices to represent all required visible configurations in our regularised RBM model.
    Therefore one can make probabilities of all other system configurations negligible, which we achieve by taking the limint $A\gg 1$.
    Indeed, Property \ref{thm:FCunique2} states that patterns are energetically separate from all other configurations of the regularised RBM by an energy gap no smaller than $A$.
    Therefore, by making $A$ large enough (which means having a temperature low enough), we can make the probabilities of all neuron configurations other than patterns to be arbitrarily close to zero.
    Also, from Property~\ref{thm:FCunique3} we can conclude that for large enough $A$, any finite ratio of pattern probabilities can be modelled by appropriately choosing the $\lambda_\alpha$ within the allowed range $\frac{3}{4}<\lambda_\alpha<\frac{5}{4}$.
    One can assign the values of the $\lambda_\alpha$ by proceeding in a recursive manner, similar to the approach in Ref.~\cite{le2008representational}.
    Namely, one chooses a value of $\lambda_1$, then adjusts $\lambda_2$ so that the ratio $P_\theta(\bm{v}^1)/P_\theta(\bm{v}^2)$ matches $P(\bm{v}^1)/P(\bm{v}^2)$, etcetera.
    Therefore, for large enough $A$, our regularised RBM construction can be arbitrarily close to any probability distribution for visible configurations. 
\end{proof}

\section{Discussion}\label{sec:discussion}
From the theorems proved in the preceding section follow a number of important features.
We discuss these and their connection to other topics in the field below.

The regularisation of weights through formula Eq.~\eqref{eq:ra} using patterns is the central feature of RAPID BMs.
Superficially, the patterns used resemble the patterns in classical Hopfield networks.
However, they work in a markedly different way.
In our construction, Properties \ref{thm:FCunique1}-\ref{thm:FCunique4} show that patterns given by Eq.~\eqref{eq:FCpatterns} are the \emph{only} local minima of the energy function of the resulting BM, irrespective of the number of patterns (except for the condition for the set $\Omega$) and of the correlations of their visible parts.
Furthermore, from Theorem \ref{thm:perfect} it follows that regularised BMs, at low temperature, have \emph{exponential capacity} for memorisation of visible configurations.
This results seem unusual from the point of using patterns in classical Hopfield networks, but are similar to modern Hopfield networks \cite{Ramsauer2020HopfieldNI}.
Therefore we first discuss the different way patterns work in all these models.

\subsection{Patterns in Boltzmann machines vs. patterns in Hopfield networks}
As mentioned in Section~\ref{sec:intro}, the Hebbian formula \eqref{eq:ra} used for weight regularisation does not guarantee by itself an effective control over the number of local minima of the energy function \eqref{eq:energy}.
In the case of classical Hopfield networks, weights made from {\it random} patterns lead to two distinct phases at low temperatures (i.e., large weights).
These are the ``retrieval'' phase and the spin-glass phase, depending on whether the number of patterns is smaller or greater than a critical value, around $0.14 N$~\cite{Amit1,Amit2}.
In the retrieval phase the patterns that build the weights are ``memorised''.
This means that configurations of the fixed points of the MCMC dynamics of the system are sufficiently close to the patterns, and so a reasonably faithful retrieval of patterns configurations is possible.

Perfect memorisation is realised only for a very small number of patterns.
This can be understood by looking at Eq.~\eqref{eq:ra}: If one has just a single pattern, the weights are correlated such the product of weights for any closed path in a graph describing the network is positive (there is no ``spin frustration'').
This means that one can easily find the ground state of the model by starting from any site and aligning the remaining spins such that every pairwise interaction is energetically minimised.
Such ground state is the chosen pattern or its perfect inversion.

As the number of patterns grows, the minima belonging to ``spurious states'' (composed by pattern hybridisation) appear and proliferate.
Still, the number of minima in the retrieval phase is at most polynomial in $N$.
When the number of patterns exceeds the critical value, the system enters the spin-glass phase, which has a number of energy minima that scales as $e^N$.
Moreover, configurations belonging to these minima are not related to patterns.
This can be understood from the perspective of the central limit theorem: weights composed from random patterns via Eq.~\eqref{eq:ra} become random, independent and Gaussian distributed variables as the number of patterns increases.
In this limit, the correlation between weights decreases drastically and so does the information about particular patterns, while the spin frustration, responsible for occurrence of the spin-glass phase, grows.
This results in the Sherrington-Kirkpatrick spin-glass model~\cite{RevModPhys.58.801}.

At intermediate temperatures the spin-glass phase may extend to a much smaller number of patterns~\cite{Amit1,Amit2} and, at some temperature, even replace the retrieval phase completely.
This is because the proper thermodynamic potential for describing MCMC dynamics is the free energy, which reduces to the energy only at low temperatures.
Therefore, even the control of the number of energy local minima (i.e. having a very small number of patterns) does not guarantee the absence of minima of the free energy function associated to spurious states.
Eventually, at high temperatures, the paramagnetic phase dominated by thermal noise takes over the phase diagram.

It should be noted that in the thermodynamic limit ($N\rightarrow\infty$), the transition from the retrieval phase into the spin-glass phase is not gradual.
The defining feature of the spin-glass phase is the non-zero value of a particular spin-glass order parameter.
Therefore, the difference between the spin-glass and other phases does not simply reduce to the number of low-energy minima.
This means that there is a genuine phase transition, with its singularities in thermodynamic limit, that separates the spin-glass phase from other phases.

The major difference between BMs and classical Hopfield networks is the use of latent neurons in the former, resulting in the universal representing power of the former, scalable with the number of hidden neurons.
Similarly the patterns introduced in Ref.~\cite{rapid} for regularising BMs have hidden parts, in contrast to the patterns in the classical Hopfield networks.
While visible parts of the patterns are related to the data being modelled, the choice for hidden parts is under-determined by data to a large degree.
The only role of the hidden parts of patterns is to regularise weights and control the roughness of the energy landscape.
In Ref.~\cite{rapid}, random hidden parts were used for simplicity, and matched the number of patterns to the complexity of the data to be modeled.
This required to scale the number of hidden neurons such that it was much larger than number of patterns ($H\gg K$) to avoid surpassing the critical value $K_c\sim0.14N$.
In this work we have instead chosen hidden parts that are strongly correlated (recall Eq.~\eqref{eq:FCpatterns}) such that any pair of hidden parts of patterns differs only at two spins.
Furthermore, their form does not depend on the corresponding visible parts.
Therefore the resulting weights do not become random and independent variables even in the case of large number of patterns (see Eq.~\eqref{eq:FCweightsfinal}).
On the contrary, the weights are correlated through the hidden part of the patterns such that even if the number of patterns building them is exponential in $V$, they still are the only local minima of the energy function.
In other words, our current construction does not feature spurious states at low temperatures.
Also, arbitrary correlations of the visible parts of patterns do not lead to additional energy minima forming from hybridisation of patterns, in contrast to classical Hopfield networks.

In Section~\ref{sec:intro} we mentioned that these properties make the proposed model closer to modern Hopfield networks, known also as Dense Associative Memory models \cite{Ramsauer2020HopfieldNI}.
The similarity extends further if one considers the marginalisation of the regularised BM over hidden variables.
This results in multi-spin interactions of very high order between visible spins, a constituting feature of the energy functions of modern Hopfield networks.
After marginalisation, the resulting model retains only the visible parts of patterns.
The hidden parts, importantly, shape the form of the new effective energy function (formally the free energy, if at an arbitrary temperature).
In the proposed construction, the choice of hidden parts lead to an effective energy function
\begin{align}
    \label{eq:dense}
    E(\bm{v})=-\sum_{\eta} \ln\left[2\cosh\left(2A(\bm{v}^\eta\cdot\bm{v}-V+\lambda_\eta)\right)\right],
\end{align}
which closely resembles the function for modern Hopfield networks, $E(\bm{v})=-\sum_{\eta}F(\bm{v}^\eta\cdot\bm{v})$, where $F$ contains high-order powers of its argument \cite{Ramsauer2020HopfieldNI}.
Thus, one can identify the visible parts of the patterns in regularised BMs with patterns in the general formulation of modern Hopfield networks.
Note that the Hebbian form of the weights (recall, Eq.~\eqref{eq:ra}), as well as spin-pair interactions, are not present in this formulation, strengthening the notion that the similarity of our regularisation to patterns in classical Hopfield networks is just superficial.

\subsection{Perfect memorisation of represented configurations}
The aforementioned relation between the visible parts of patterns in regularised BMs and the patterns in modern Hopfield networks suggests that such BMs can perfectly memorise visible configurations as done by the latter.
Indeed, Theorem \ref{thm:perfect} constitutes a positive answer to it.
Namely, it states that regularised BMs can perfectly memorise visible configurations even in the case of strong correlation between them.
Additionally, Theorem \ref{thm:perfect} also implies that regularised BMs have an exponential memorisation capacity.
The cost of this capacity is the one hidden neuron per memorised visible configuration.
In fact, this cost is an upper bound, since our construction shows that memorisation is possible with no assumptions about correlations existent in the data and using shallow RBM models.
Furthermore, the attraction basins for fixed points of zero-temperature MCMC dynamics are very small in our construction (they are of size $V$).

The simple form of the energy function in regularised BMs potentially offers advantage over modern Hopfield networks.
Indeed, although modern Hopfield networks offer exponential capacity for storage of strongly correlated patterns, the computation of their energy function, $E(\bm{v})=-\sum_{\eta}F(\bm{v}^\eta\cdot\bm{v})$, requires a number of operations that scales with the number of used patterns.
When the training data contains millions of configurations, calculating the energy function every time the network configuration $\bm{v}$ changes would make MCMC dynamics very slow.
Even worse, in the case of exponential capacity the energy function would take an exponential number of operations.
The interesting feature of patterns in regularised BMs is that the resultant weights (recall, given by Eq.~\eqref{eq:ra}) do not carry information about individual patterns, but about their sum.
Evaluating the corresponding energy function, Eq.~\eqref{eq:energy} (or Eq.~\eqref{eq:model}), also simple, and involves only interactions between pairs of spins.
One could argue that the large number of operations needed for computing the energy function for modern Hopfield networks is replaced in our regularised BMs by a large number of latent neurons.
However, let us stress that the need of one hidden unit per stored pattern is only an upper bound, obtained for the worst-case scenario of learning a completely random distribution.
In realistic cases data is usually non-random and there are underlying features which can be extracted.
Furthermore, deep networks are particularity effective for such extraction.
Therefore, using our construction for building regularised deep BMs should allow these to cope with realistic data using far less hidden neurons than datapoints to learn.
In such case, operating regularised deep BMs should be much cheaper, in numerical terms, than modern Hopfield networks.
The example of our construction and following Theorem \ref{thm:perfect} lay a basic but strong formal foundation for the memorisation abilities of regularised BMs.

\subsection{Effective regularisation and absence of the spin-glass phase at low temperatures}
Let us now discuss results related to Theorem \ref{thm:representing}, that is, an effective regularisation of BMs while keeping their 
universal representing power. 
In the discussion about patterns in classical Hopfield networks we noted that the Hebbian form of the weights does not guarantee control over the number of minima of the energy function.
Even worse, when one has a very large amount of random patterns the resultant weights are random, independent, and gaussianly distributed.
In this limit, the set of weights can take any form, so using them for regularisation would be meaningless.
Therefore it is crucial that our construction allows for control of the number of minima of the energy function in conjunction with the representing power.
Indeed, Properties \ref{thm:FCunique1}-\ref{thm:FCunique4} show that our regularisation is effective: pattern configurations are local minima of the energy function, there are no energy minima other than the patterns, and we use a reasonable number of hidden neurons to achieve this (equal to the size of dataset $\Omega$).
Here we discuss that our construction, at low temperatures, is also effective in preventing the occurrence of a spin-glass phase.

The low-temperature limit allows us to focus on local minima of the energy function, instead of local minima of the free energy.
Our first argument that regularised BMs cannot be in the spin-glass phase is based on comparing the number of local minima of the corresponding energy function with the number expected in a spin-glass phase.
The latter is expected to grow exponentially with the system size, or in other words, proportionally to the size of the phase space of the system, $2^N$.
The phase space of an RBM is of size $2^{V+H}$, although its bipartite topology leads to an upper bound of $\text{min}(2^V, 2^H)$, as we showed in Lemma~\ref{thm:RBMbound}.
Therefore, for large RBMs in the spin-glass phase the number of minima should be proportional to the smallest quantity of $2^V$ or $2^H$.
Let us therefore consider two cases of dataset size: $|\Omega|=H<V$ and $|\Omega|=H>V$.
In the first case the phase space of RBMs has at most $2^H$ minima, and a finite fraction of that number (i.e. $\propto2^H$) is expected to be the number of minima in the spin-glass phase.
In contrast, in regularised BMs there are only $H$ minima in total.
For a large system (i.e., a system with large $V$ and ratio $H/V$ constant), the discrepancy between $H$ and $\propto2^H$ means that the system is not in a spin-glass phase.
In the second case the phase space of the RBM has at most $2^V$ minima, a finite fraction of which (i.e. $\propto2^V$) is expected to be the number of minima in the spin-glass phase.
If the size of dataset $|\Omega|$ is proportional to some polynomial of $V$, for large systems the discrepancy between $\Omega$ and $\propto2^V$ means that the system is not in a spin-glass phase.
Nevertheless, for very large $\Omega\propto 2^V$ the minima-counting argument fails.

In the case of the maximal $\Omega$ dataset containing all $2^V$ visible configurations, one needs $H=2^V$ hidden neurons to represent it, which means $H=2^V$ patterns and associated energy local minima in our construction --a value which saturates the bound $\text{min}(2^V, 2^H)$.
We therefore now give a different argument, that also applies to the case of $2^V$ minima.
It is based on the fact that in the thermodynamic limit $N\rightarrow\infty$ the defining feature of the spin-glass phase is the non-zero value of a particular spin-glass order parameter, and thus there has to be a sharp transition (non-zero, to zero) from the spin-glass phase to any other phase.
Note that our construction allows us to consider arbitrarily large systems, so the thermodynamic limit is accessible.
For a small to intermediate size of the dataset $|\Omega|$ the system is not in the spin-glass phase.
Pattern retrieval properties and zero-temperature MCMC dynamics (recall the proof of Theorem \ref{thm:perfect}) suggest that the system is, in fact, in the retrieval phase.
As we increase $|\Omega|$ there are no changes in the zero-temperature MCMC dynamics or retrieval properties.
This contradicts a possible phase transition, where MCMC dynamics are expected to change.

In conclusion, our construction is a clear example of effective regularisation of Boltzmann machines resulting in the absence of the spin-glass phase at low temperatures.

\subsection{Universal representing power and sampling hardness}
The novelty of Theorem \ref{thm:representing} lies in proving that universal representability for BMs with weight regularisation is possible while, at the same time, controlling the number of local minima of the energy function.
The ultimate aim of weight regularisation is to ease sampling of BMs, since this is necessary for estimating the gradients during training \cite{Hinton1983proceedings,HintonRBMguidelines,Tieleman2008pcd}.

There are two types of sampling relevant for regularised BMs.
The first type is the standard MCMC sampling, common to any BM, and which can be carried without additional information about the energy landscape of the model.
The second one, specific only to regularised BMs, uses patterns as proxies for low-temperature sampling.
For the standard MCMC sampling, the control over number of minima is useful in reducing the numerical cost since sampling over an exponential number of minima is hard.
Yet, this alone does not guarantee easy sampling.
In the proof of Theorem \ref{thm:representing} we considered the limit $A\gg 1$, in which the probability of any configuration other than patterns was suppressed arbitrarily close to zero.
In such case the energy minima associated with patterns become very deep, while the temperature (given by $A^{-1}$) approaches zero.
In this limit, the system is not easy to sample using standard MCMC dynamics.
This is because the MCMC dynamics is trapped whenever it reaches any pattern configuration, and the transition times between minima corresponding to different patterns grow exponentially.
Fundamentally, the ability of the BM to represent any data probability distribution cannot always be accompanied by requirement of easy sampling using standard MCMC dynamics. The uniform probability distribution for all but one configuration for which there is high probability is the classical example. 
However our construction shows that BM's universal representability can be, instead, accompanied by requirement of controlled number of minima.
We note that the proof for controlled number of minima (Properties \ref{thm:FCunique1}-\ref{thm:FCunique4}) does not require $A\gg 1$.
In this sense Theorem \ref{thm:representing}, which is our main result, proves as much as it can be formally proven regarding universal representing power and sampling hardness of standard MCMC.

However, the practical problem of sampling BMs mostly concerns sampling during training.
First, because training is more difficult and computationally costly than generating, and second, because models successfully trained are usually easy to sample.
We note that RAPID \cite{rapid} contains two key elements: (i) regularisation of weights using patterns (called RA), and (ii) using those patterns as proxies for low-temperature sampling (PID).
The use of patterns for regularising weights not only allows us to avoid the spin-glass phase and hence train in lower temperatures avoiding thermal noise, but the very patterns are additional information we have about the structure of the energy minima in the phase space of the model.
Low-temperature sampling, used for estimation of training parameters, should at least visit the most representative low-energy states.
As shown in Ref.~\cite{rapid}, using the set of patterns either directly as minima proxies or as the seed in MCMC Contrastive Divergence methods allows for efficient pattern-guided sampling of the model averages necessary for training.

The construction presented in this work is the example that regularised BMs have universal representing power in conjunction with easy guided sampling for training at low temperatures.
Indeed, at very low temperatures ($A\gg 1$) the patterns are the only relevant configurations for estimating any averages needed in the training phase.
Hence, using the set of patterns for calculating averages is exact.
Note that having information about patterns eliminates the problems in standard MCMC associated to exponential transition times.
For low temperatures the relevant configurations involve also few-spin excitations from patterns.
Using the set of patterns as a seed for MCMC, Contrastive Divergence methods allow for precise and fast estimation of any relevant averages.

Therefore one can conclude that Theorem \ref{thm:representing} lays the formal foundation for RAPID.
Regularisation of weights (RA) can be efficient (full control over number of minima, using reasonable number of hidden neurons) while allowing for universal representing power.
At the same time, pattern-guided sampling (PID) can be exact at very low temperatures or precise and fast at low temperatures.

\section{Summary and outlook}\label{sec:summary}
Recently, Ref.~\cite{rapid} introduced the concept of weight regularisation in BMs, with the goal of controlling the number of local minima of the energy function and ultimately provide an easier sampling and training for this kind of models.
In this work, we have presented a proof that such models can be universal approximators.
To this end, we have presented a mathematical construction for restricted BMs where their weights are built from patterns, Eq.~\eqref{eq:ra}, ensuring that these patterns are the only local minima of the energy function.
This is irrespective of the correlations between the visible parts of patterns as well as of its number, which may be even exponential in the number of visible neurons of the model.
Our construction allows to assign arbitrary probabilities to such patterns, while reducing the probabilities of all other states arbitrarily close to zero.
The universal representability of the construction is the proved by the fact that the visible parts of the patterns can cover any set of visible configurations.
Moreover, the model exhibits fast and precise pattern-guided sampling at low temperatures.
These two results, universal representative power and fast sampling, lay the formal foundations of the RAPID method presented in Ref.~\cite{rapid}.

Furthermore, the construction proposed allows also to go beyond the practical application of RAPID showcased in Ref.~\cite{rapid}.
We have shown that correlations between the hidden parts of the patterns may radically decrease the number of hidden neurons needed to avoid pattern hybridisation.
Indeed, with a number of hidden neurons just equaling the number of patterns in the model, our construction avoids hybridisation and the spin-glass phase even in case of an exponential number of patterns.
The exploration of the practical use of correlated hidden components within patterns in RAPID models in real-world experiments is a prospect reserved for future research.

Additionally, our construction allows to connect regularised BMs with modern Hopfield networks.
In Theorem~\ref{thm:perfect} we proved that RA-BM can be used for memorisation of the visible configurations with exponential capacity.
While marginalisation over the hidden neurons makes RA-BMs similar to modern Hopfield networks, the form of unmarginalised RA-BMs presents a much simpler and less numerically costly energy function.
Therefore we believe that the deep regularised BMs may be an interesting complement to modern Hopfield networks.
Experiments with such machines are underway.%~\cite{rapid2}.

The main drawback of our construction as an example of memorising RA-BM is the very small attraction basin for fixed points in zero-temperature MCMC dynamics.
However we have found yet another mathematical construction of patterns that has the universal representative power in one limit and arbitrarily large attraction basins in another.
Furthermore such construction better reflects spectral form of BMs trained with real data, however it is more difficult for mathematical analysis.
We leave description and analysis of this construction for another work.

\begin{acknowledgments}
ICFO group acknowledges support from: ERC AdG NOQIA; Ministerio de Ciencia y Innovation Agencia Estatal de Investigaciones (PGC2018-097027-B-I00/10.13039/501100011033, CEX2019-000910-S/10.13039/501100011033, Plan National FIDEUA PID2019-106901GB-I00, FPI, QUANTERA MAQS PCI2019-111828-2, QUANTERA DYNAMITE PCI2022-132919, Proyectos de I+D+I ``Retos Colaboración'' QUSPIN RTC2019-007196-7); MICIIN with funding from European Union NextGenerationEU (PRTR-C17.I1) and by Generalitat de Catalunya; Fundació Cellex; Fundació Mir-Puig; Generalitat de Catalunya (European Social Fund FEDER and CERCA program, AGAUR Grant No. 2021 SGR 01452, QuantumCAT \ U16-011424, co-funded by ERDF Operational Program of Catalonia 2014-2020); Barcelona Supercomputing Center MareNostrum (FI-2023-1-0013); EU (PASQuanS2.1, 101113690); EU Horizon 2020 FET-OPEN OPTOlogic (Grant No 899794); EU Horizon Europe Program (Grant Agreement 101080086 — NeQST), National Science Centre, Poland (Symfonia Grant No. 2016/20/W/ST4/00314); ICFO Internal ``QuantumGaudi'' project; European Union’s Horizon 2020 research and innovation program under the Marie-Skłodowska-Curie grant agreement No 101029393 (STREDCH) and No 847648  (``La Caixa'' Junior Leaders fellowships ID100010434: LCF/BQ/PI19/11690013, LCF/BQ/PI20/11760031,  LCF/BQ/PR20/11770012, LCF/BQ/PR21/11840013).
E.P. is supported by ``Ayuda (PRE2021-098926) financiada por MCIN/AEI/ 10.13039/501100011033 y por el FSE+''.
A.P.-K. acknowledges support from the Spanish Ministry of Science and Innovation MCIN/AEI/10.13039/501100011033 (CEX2019-000904-S and PID2020-113523GB-I00), the Spanish Ministry of Economic Affairs and Digital Transformation (project QUANTUM ENIA, as part of the Recovery, Transformation and Resilience Plan, funded by EU program NextGenerationEU), Comunidad de Madrid (QUITEMAD-CM P2018/TCS-4342), Universidad Complutense de Madrid (FEI-EU-22-06), the CSIC Quantum Technologies Platform PTI-001, and the NCCR SwissMAP.
G.M-G. acknowledges funding from the European Union.
Views and opinions expressed are, however, those of the author(s) only and do not necessarily reflect those of the European Union, European Commission, European Climate, Infrastructure and Environment Executive Agency (CINEA), nor any other granting authority.
Neither the European Union nor any granting authority can be held responsible for them. M.Á.G.-M. acknowledges funding from the Spanish Ministry of Education and Professional Training (MEFP)
through the Beatriz Galindo program 2018 (BEAGAL18/00203), QuantERA II Cofund
2021 PCI2022-133004, Projects of MCIN with funding from European Union NextGenerationEU (PRTR-C17.I1) and by Generalitat Valenciana, with Ref. 20220883 (PerovsQuTe)
and COMCUANTICA/007 (QuanTwin), and Red Temática RED2022-134391-T.

\end{acknowledgments}

\bibliographystyle{apsrev4-1}
\bibliography{bibliography}

\end{document}